\documentclass[11pt]{article}

\usepackage{amsmath}
\usepackage{amssymb}
\usepackage{graphicx}
\usepackage{float}
\usepackage{caption}
\usepackage{amsthm}
\usepackage{enumerate}
\usepackage{color}
\usepackage{authblk}
\definecolor{green}{rgb}{0,0.5977,0}
\usepackage[linesnumbered,algoruled,boxed,lined]{algorithm2e}
\usepackage{multirow}

\newcommand{\rr}{\mathbb R}

\newcommand{\ee}{\mathbb E}

\newcommand{\suchthat}{\ | \ }

\newcommand{\twocases}[4]{\begin{cases} #2 & #1 \\ #4 & #3 \end{cases}}
\newcommand{\threecases}[6]{\begin{cases} #2 & #1 \\ #4 & #3 \\ #6 & #5 \end{cases}}

\newcommand{\txt}[1]{\text{#1}}

\newcommand{\stext}[1]{\ \ \ \ \ \text{(#1)}}
\newcommand{\stextn}[1]{\\&\ \ \ \ \ \ \stext{#1}}

\newcommand{\lp}[3]{\begin{tabular}{l l}\textbf{#1} & \begin{tabular}{l}$#2$\end{tabular}\\\textbf{subject to} & \begin{tabular}{l l}#3\end{tabular}\end{tabular}}
\newcommand{\lpmax}[2]{\lp{maximize}{#1}{#2}}

\theoremstyle{plain}
\newtheorem{theorem}{Theorem}
 
\newtheorem{proposition}[theorem]{Proposition}

\theoremstyle{definition}

\numberwithin{theorem}{section}

\title{Cheap Talk in Bilateral Trade}
\author{Jamie Tucker-Foltz\thanks{Yale School of Management, \texttt{j.tuckerfoltz@yale.edu}} \qquad Richard Zeckhauser\thanks{Harvard Kennedy School, \texttt{richard\_zeckhauser@harvard.edu}}}

\begin{document}

\maketitle

\begin{abstract}
    A single seller offers one or more goods to a single buyer. The buyer's values and the seller's costs are private information. Each player has a commonly known prior over the other player's value or cost, supported on a finite set. What is the optimal selling mechanism?
    
    We argue that, despite this question's importance and apparent simplicity, prior work offers no satisfactory answer. If the seller simply chooses an optimal menu given her realized costs, she fails to exploit her informational advantage. At the other extreme, the optimal trade mechanism that satisfies IC/IR constraints for both parties fails in practice, as it conditions prices on the seller's unknown costs in an unenforceable way. The seller's realistic capabilities lie somewhere in between: she may leverage private information but lacks unlimited commitment power.
    
    To bridge this gap, we consider a solution concept built on the realistic assumption that the seller can commit to prices but nothing more. Similar---albeit technically distinct---solution concepts have been studied in the context of auctions with multiple buyers. Our concept proves surprisingly rich even with a single buyer. In our model, the buyer and seller engage in multiple rounds of cheap talk before the seller posts a menu of priced bundles. The buyer then purchases.
    
    We measure value as profit for the seller and consumer surplus for the buyer. We prove that, when there is only one good, such cheap talk cannot improve the welfare of either party. We then demonstrate that cheap talk \emph{can} be useful when there are (1) multiple goods with additive costs and values, (2) multiple units of a single good with constant marginal cost and diminishing marginal value, (3) interdependent values for a single good, or (4) repeated play of a one-good game. We also show that multiple rounds of communication can yield strictly higher expected profit than a single round.
    
    Conceptually, these results show that in any extension beyond the canonical setting of one seller, one buyer, and one good, cheap talk creates value in bilateral trade. We discuss how realistic factors beyond our stripped-down model combine with cheap talk to enhance this value even further.
\end{abstract}

\section{Introduction}\label{secIntro}

Some items are sold with prices posted by sellers. Others, particularly high-priced items such as cars, houses, and corporations, typically include bargaining. Bargaining involves a mix of statements and offers. A critical question explored here is: when can a player benefit by making a statement or statements? Statements by themselves are costless. When they are also non-binding and unverifiable, they are often referred to as cheap talk. That leads to the question: \emph{when} and \emph{how} can cheap talk be valuable?

We focus on the well-studied bilateral trade setting, with a buyer and seller potentially transacting to sell a good or goods. Each player has private information that is a realization from a common value prior. That information is value for the buyer and cost for the seller. It is well known that when private knowledge is involved, full efficiency is lost. The question becomes: what is the best the pair can realistically hope to achieve?

Our answer to the \emph{when} question is that cheap talk is useless in the most basic setting involving a single good, but yields value once even the smallest of steps are taken into complexity. We answer the \emph{how} question by illustrating cheap talk at work in a variety of contexts.

Cheap talk in negotiation suffers from Cassandra's curse. Apollo gave Cassandra the power of prophecy but, when she spurned him, cursed her so that no one would believe her. So too, a negotiator may possess genuine private information and communicate it truthfully, but there is an ever-present incentive to misrepresent private information. Listeners can never fully trust the message; thus, full efficiency remains out of reach. 

A prime goal of this paper, however, is to demonstrate that cheap talk can significantly raise efficiency by enabling the buyer and seller to reach better deals, that is to secure mutually beneficial transactions that would be lost if sellers simply posted prices---possibly a complex menu of prices---without any communication.

\subsection{Outline of Contributions}\label{subContributions}

Before formally defining our model, we begin in Section~\ref{secExamples} with a series of examples that illustrate the main theses of this paper:
\begin{itemize}
	\item Section~\ref{subExampleNoGains} informally demonstrates why cheap talk fails to create value in the simplest possible market interaction, one where there is a single good being sold with a single message in a single round. We later strengthen this intuition into a strong impossibility theorem for arbitrary communication protocols.
	\item Section~\ref{subExampleBuyerWelfareImprovement} provides a concrete, believable example where cheap talk yields positive surplus under the following minimally-modified condition: The seller holds a binary piece of information that affects the buyer's value which must be disclosed to the buyer prior to sale. The buyer can quadruple his ex ante consumer surplus by making a voluntary, cheap talk declaration prior to learning this piece of information.
	\item Section~\ref{subExampleMechDesign} considers an extended example with multiple goods and independent costs/values, viewed from the perspective of the seller. What is the seller's optimal mechanism? This question has been widely studied in microeconomic theory. We argue, however, that existing solution concepts fail to provide a satisfactory answer in the context when the seller holds private information. We explain why the revelation principle does not hold. We then describe a dynamic mechanism that leverages cheap talk to extract a greater expected profit at the expense of the buyer.
\end{itemize}

In Section~\ref{secLitReview}, we connect our work to the broader literature on cheap talk, which has largely focused on abstract sender-receiver games and ignored bilateral trade. The literature on informed principal problems and commitment issues in mechanism design is more relevant to our setting. We explain how our cheap talk solution concept differs from the recently proposed axiom of \emph{credibility} from Akbarpour and Li~\cite{CredibleTrilemma}.

Our basic model is formally defined in Section~\ref{secModel}. At a high level, there is one seller, one buyer, and a finite number of goods to be sold. The seller's costs and buyer's values are private information, drawn independently from finitely supported  and commonly known distributions. A cheap talk game involves players taking turns sending messages for a finite number of rounds, after which the seller posts a menu. Each item on the menu consists of a bundle of goods, each with an associated price. Buying zero goods is required to be an option. Then the buyer chooses one option from the menu.

Section~\ref{secImpossibility} presents Theorem~\ref{thmImpossibility}, our main impossibility theorem showing that cheap talk cannot yield value in the simplest context. As fundamental as this statement is, it surprisingly does not seem to exist in prior literature. The difficulty of the proof is also surprising. The theorem states that cheap talk cannot be beneficial in the basic model when there is only one good and one message. In more detail, any equilibrium of a cheap talk game can be replaced by a new equilibrium where signals are uninformative. In the new equilibrium, both players receive the same payoffs. The proof proceeds by a careful doubly-nested induction argument from the lowest-value type up to the highest, invoking the incentive constraints to identify a specific signal $X$ that is the best choice for all buyer types in each subgame corresponding to the final round of communication. Using the known result that the optimal price is a monotone function of the seller's cost, we show that, if sellers assume the buyer sent $X$ in each subgame, we get a payoff-equivalent equilibrium.

Section~\ref{secPositive} formally presents extensions of this basic setting on a variety of dimensions, including the ones informally described in Section~\ref{secExamples}, along with some new extensions:
\begin{itemize}
	\item Multiple goods, with additive values and no changes to the basic model.
	\item Multiple units of single good with constant marginal cost for the seller and diminishing marginal value for the buyer.
	\item Interdependent values for a single good, which we show reduces to the setting of multiple goods.
	\item Repeated play of a one-good game.
\end{itemize}
We describe our computational approach to verify the existence of equilibria where cheap talk yields value in these various settings. These results stand in stark contrast to Theorem~\ref{thmImpossibility}. We also construct an example where multiple rounds of cheap talk provably yield higher profits than in any pure strategy equilibrium of a 1-round cheap talk game.

Section~\ref{secDiscussion} concludes with a discussion of factors beyond the scope of our model, including human behavioral tendencies, that can further enhance the value of cheap talk in practice.

\section{Examples}\label{secExamples}

When does cheap talk yield better deals? We proceed from extremely simple to more complex situations.

\subsection{One Good, One Play, One Message: No Gains from Cheap Talk}\label{subExampleNoGains}

\subsubsection{Distribution on Buyer Value; Seller Cost Fixed}

The simplest sale situation arises when a seller is seeking to sell a good with a fixed cost $c$ to a single buyer whose value for the good, $v$, is unknown. However, the distribution of the buyer's value, $f(v)$, and its cumulative distribution, $F(v)$, are common knowledge. The buyer purchases if his value is greater than or equal to the price $p$. The seller sets a price $p$ to maximize her profit if the item sells times the likelihood of a sale, namely
\[
(p - c)[1 - F(p)].
\]
Here, $1 - F(p)$ is a typical demand curve, representing the probability that the buyer’s value exceeds $p$.

In this context, neither the buyer nor the seller could gain by making a statement. The seller cannot convey any new information. For the buyer, as each potential signal deterministically elicits an optimal best response for the seller,\footnote{This argument is not fully rigorous because there is a possibility that multiple best responses exist, in which case the seller could randomize over them. We make a much stronger version of this argument precise in Theorem~\ref{thmImpossibility}.} all buyer types should choose the signal yielding the lowest price. Thus, the signal is meaningless to the seller; she might as well ignore it.

\subsubsection{Distribution on Buyer Value; Distribution on Seller Cost}

The seller’s cost may also have a distribution, with that distribution being common knowledge. In the posted-price situation, the seller learns her cost before she posts, and she never posts a price below her cost. A numerical example reveals the potential for cheap talk to be beneficial.

Let the buyer’s value distribution be
\[
f(v) = \threecases{\txt{with probability } 0.05}{\$100}{\txt{with probability } 0.30}{\$50}{\txt{with probability } 0.65}{\$10}
\]
and the seller’s cost distribution be
\[
g(c) = \twocases{\txt{with probability } 0.40}{\$40}{\txt{with probability } 0.60}{\$0}.
\]

For each price and cost realization, the seller’s expected profit is
\[
\Pr[\text{sale}] \times (p - c).
\]
For example, if $c = \$40$ and $p = \$50$, sale occurs when $v \ge \$50$, so expected profit is
\[
\pi = 0.35 \times (\$50 - \$40) = \$3.50.
\]

\begin{table}[h!]
	\centering
	\begin{tabular}{|r|c c c|}
		\hline
		& $p=\$100$ & $p=\$50$ & $p=\$10$ \\
		\hline
		$\Pr(\text{sale})$ & 0.05 & 0.35 & 1.00 \\
		\hline
		$\pi$ if $c=\$40$ & \$3.00 & \textbf{\$3.50} & -- \\
		$\pi$ if $c=\$0$  & \$5.00 & \textbf{\$17.50} & \$10.00 \\
		\hline
	\end{tabular}
	\caption{Seller as a function of posted price and realized cost. Bolded entries indicate optimal prices.}
	\label{tabSellerProfits}
\end{table}

With either cost type, the seller posts $p = 50$. Expected seller profit is
\[
0.40(\$3.50) + 0.60(\$17.50) = \$11.90.
\]
Equivalent expected-value calculations show that buyer consumer surplus is $\$2.50$.

\subsubsection{Can Cheap Talk Help?}

The buyer proposes cheap talk that would provide benefit to both players. He suggests the following strategy:
\begin{quote}
	``If my value is $\$100$ or $\$10$, I will say $A$. You will price at $\$100$ when $c=\$40$ and at $\$10$ when $c=\$0$. If my value is $\$50$, I will say $B$, and you will price at $\$50$.'' 
\end{quote}
These prices are optimal for the seller if the buyer adheres to the proposal. The expected values are shown in Table~\ref{tabCheapTalkOutcomes}. When the buyer adheres, payoffs are higher for both players than under no communication.

\begin{table}[h!]
	\centering
	\begin{tabular}{|l|c c c|}
		\hline
		& No Communication & Buyer Adheres & Buyer Defects \\
		\hline
		Seller Profit & 11.90 & 15.60 & 7.20 \\
		Consumer Surplus & 2.50 & 2.70 & 9.90 \\
		Total Surplus & 14.40 & 18.30 & 17.10 \\
		Deadweight Loss & 3.90 & 0 & 1.20 \\
		\hline
	\end{tabular}
	\caption{Expected outcomes under no communication, proposed cheap talk with adherence, and deviation by the buyer with $v=50$.}
	\label{tabCheapTalkOutcomes}
\end{table}

However, cheap talk fails here, given the potential to misrepresent. The buyer with $v = \$50$ has a strong incentive to state $A$ rather than the proposed $B$. Given $B$, his consumer surplus is zero. Under the proposed pricing, if he states $A$, his expected surplus is
\[
0.60 \times (\$50 - \$10) = \$24.
\]
Were the seller to be fooled, the payoffs in the third column of Table~\ref{tabCheapTalkOutcomes} would apply.

We posit seller sophistication. The seller could easily anticipate that all buyer types are sending the same message and thus learns nothing from cheap talk. She therefore ignores the buyer’s statement.

Cheap talk would bring value if the buyer could somehow guarantee adherence to his proposal,\footnote{If the buyer could commit adherence, his optimal strategy would state A for $v = \$100$ and $v = \$10$, and would state A with probability $0.375$ and B with probability $0.625$ for $v = \$50$.} for example via a third-party enforcer or a repeated-play environment. Either arrangement violates our current assumptions.

\subsubsection{Why Cheap Talk Fails}\label{sssCheapTalkFails}

The failure of cheap talk here reflects a general principle: cheap talk cannot simultaneously reduce deadweight loss and leave both parties at least as well off as in the status quo. Thus, any informative cheap talk that reduces deadweight loss must harm at least one party. The harmed party will simply refuse to listen.

The heuristic logic is as follows. Deadweight loss arises from excluded low-value buyers. In the absence of communication, types with $v < p^*$ (the optimal posted price) do not trade, even though trade would be efficient when $c$ is low. Reducing deadweight loss requires lower prices for low-value types. For $v=\$10$ buyers to trade, the seller must sometimes charge $p=\$10$. However, lower prices attract mimicry from all buyer types. In particular, the $v=\$50$ buyer earns zero consumer surplus in the status quo, paying exactly his value. He has nothing to lose by sending any message that might induce a lower price, and has everything to gain. This inevitable misrepresentation ultimately causes the equilibrium to unravel.

While this reasoning is convincing, at least in this particular example, it is not obvious a priori that there is  no alternative pricing system that is simultaneously incentive compatible for both buyer and seller. One of our main contributions, Theorem~\ref{thmImpossibility}, formally demonstrates this impossibility.

\subsection{Buyer Gains from Cheap Talk}\label{subExampleBuyerWelfareImprovement}

Consider now a situation where the seller holds a piece of private information affecting the buyer's value. The motivating example is a new car showroom, a setting in which verbal statements---namely cheap talk---are often made to influence pricing.

There are two types of buyers: \emph{enthusiasts} and \emph{sticklers}, in equal proportions. They differ along two dimensions. Enthusiasts have high willingness to pay and care only modestly about color. Sticklers are thrifty and will only buy the color they prefer. Members of each type are equally likely to prefer a red or a blue car. The dealer’s cost is always \$30.

The buyer population is described in Table~\ref{tabTwoGoodsValues}.

\begin{table}[h!]
	\centering
	\begin{tabular}{|l|c c c c|}
		\hline
		& Stickler (Red) & Stickler (Blue) & Enthusiast (Red) & Enthusiast (Blue) \\
		\hline
		Value if Red & \$40 & \$0 & \$48 & \$46 \\
		Value if Blue & \$0 & \$40 & \$46 & \$48 \\
		\hline
	\end{tabular}
	\caption{Buyer types, probabilities, and valuations for red and blue cars.}
	\label{tabTwoGoodsValues}
\end{table}

The dealer is known to have exactly one car. Its color is not known to the buyer. In the no-communication case, the seller announces the car’s color and posts a price. With cheap talk, the buyer may first make a statement.

\textbf{No communication (seller has red car).}
The seller considers prices \$48, \$46, and \$40. At \$48, she sells only to a red enthusiast, yielding
\(
\ee[\pi] = \tfrac14 \times (\$48 - \$30) = \$4.5.
\)
At \$46, she sells to either enthusiast, yielding
\(
\ee[\pi] = \tfrac12 \times (\$46 - \$30) = \$8.
\)
At \$40, she sells to either enthusiast and the red stickler, yielding
\(
\ee[\pi] = \tfrac34 \times (\$40 - \$30) = \$7.5.
\)
Hence, she prices at \$46. Only the red enthusiast earns consumer surplus:
\(
\ee[CS] = \tfrac14 \times (\$48 - \$46) = \$0.5.
\)

\textbf{Cheap talk (seller has red car).}
If the buyer announces ``I prefer red,'' the seller faces a pool consisting of half sticklers with value \$40 and half enthusiasts with value \$48. At price \$48, she sells only to the enthusiast, yielding
\(
\ee[\pi] = \tfrac12 \times (\$48 - \$30) = \$9.
\)
At price \$40, she sells to both types with certainty, yielding
\(
\ee[\pi] = 1 \times (\$40 - \$30) = \$10.
\)
Thus, she prices at \$40.

If the buyer announces ``I prefer blue,'' the seller faces a pool consisting of half sticklers with value \$0 and half enthusiasts with value \$46. At price \$46, she sells only to the enthusiast, yielding
\(
\ee[\pi] = \tfrac12 \times (\$46 - \$30) = \$8.
\)
Lower prices to accommodate the stickler are unprofitable, so the seller prices at \$46.

Overall, the seller’s expected profit with cheap talk is
\(
\ee[\pi] = \tfrac12 \times \$10 + \tfrac12 \times \$8 = \$9.
\)

Sticklers receive zero consumer surplus. A red enthusiast whose color matches earns \$48 - \$40 = \$8, which occurs with probability $\tfrac14$. A blue enthusiast pays \$46 for a car worth \$46, earning zero surplus. Hence,
\(
\ee[CS] = \tfrac14 \times \$8 = \$2.
\)

The analysis is symmetric when the seller has a blue car.

\begin{table}[h!]
	\centering
	\begin{tabular}{|r|c c|} 
		\hline
		& Seller profit (\$) & Buyer surplus (\$) \\
		\hline
		No-communication & 8 & 0.5 \\
		Cheap talk & 9 & 2 \\
		\hline
	\end{tabular}
	\caption{Expected payoffs with and without cheap talk.}
	\label{tabTwoGoodsComparison}
\end{table}

Both parties benefit from cheap talk. The seller’s expected profit rises from \$8 to \$9, while the buyer’s expected surplus quadruples from \$0.5 to \$2.

\subsection{Optimal Mechanism Design}\label{subExampleMechDesign}

The potential for utility-improving cheap talk raises intriguing questions for mechanism design. An example where the seller holds private information illustrates the subtleties of her pricing problem.

There are two goods (numbered 1 and 2), two possible seller types (each occurring with equal probability $\frac12$), and three possible buyer types (each occurring with equal probability $\frac13$). The possible seller costs and buyer values are shown in Table~\ref{tabHierarchyInstance}.

\begin{table}[h!]
	\centering
	\begin{tabular}{|r|c c|c c c|} 
		\hline
		& \multicolumn{2}{c|}{Seller cost (\$)} & \multicolumn{3}{c|}{Buyer value (\$)}  \\		
		& Type $A$ & Type $B$ & Type 1 & Type 2 & Type 3 \\
		\hline
		Good 1 & 168 & 270 & 492 & 546 & 84\\
		
		Good 2 & 246 & 30 & 330 & 210 & 306\\
		\hline
	\end{tabular}
	\caption{A simple market with multiple goods and two-sided private information. Each seller type is equally likely and each buyer type is equally likely.}
	\label{tabHierarchyInstance}
\end{table}

What is the seller's optimal mechanism to sell the goods? Without the uncertainty over seller types, there is a well-known solution to this problem. The revelation principle holds that any extensive-form game the seller could possibly design is equivalently implemented by first having the buyer directly reveal his type, then having the seller simulate what would have happened and announcing the outcome. Implicit is the assumption that the seller can commit to faithfully simulating the original game. However, in the context where a single buyer holds the only private information, only a very weak---and therefore quite plausibly realistic---form of commitment is actually required from the seller. This is because a direct-revelation mechanism is equivalent to the seller posting a menu of prices for various bundles of goods, and the buyer choosing one option. Thus, the seller needs only to commit to not haggling over prices once they have been posted on the menu, the widely accepted social norm in many markets.

The seller can surely post an optimal menu given her realized costs. We call this the \emph{no-communication menu}. Finding the optimal menu for each realization separately can be accomplished by linear programming, as elaborated in Section~\ref{subLP}. Combining the menus, the solution has the structure shown in Table~\ref{tabHierarchyMechanism}, specializing $X = 798$ and $Y = 756$. In other words, when the seller has Type $A$, she offers the menu in the first row: Good 1 for \$546, Good 2 for \$306, or both goods for \$798. When the seller has Type $B$, she offers Good 2 alone for \$306 or both goods for \$756. This strategy yields her an ex ante expected profit of
\$335. We refer to this figure as the \emph{no-communication profit}.

\begin{table}[h!]
	\centering
	\begin{tabular}{|r r|c c c|}
		\hline
		&& \multicolumn{3}{c|}{Buyer}\\
		&& Type 1 & Type 2 & Type 3 \\
		\hline
		\multirow{2}{*}{Seller}
		& Type $A$ & Both goods for $\$X$ & Good 1 for \$546 & Good 2 for \$306 \\
		& Type $B$ & Both goods for $\$Y$ & Both goods for \$756 & Good 2 for \$306 \\
		\hline
	\end{tabular}
	\caption{The optimal selling mechanism for the instance from Table~\ref{tabHierarchyInstance}, where $X$ and $Y$ depend on the solution concept.}
	\label{tabHierarchyMechanism}
\end{table}

Can the seller do any better? This solution has a severe shortcoming: it fails to leverage the seller's informational advantage: The buyer does not know the seller's cost. By simply posting her optimal menu, the seller might as well announce her cost to the buyer at the start of the game, effectively surrendering her advantage while gaining nothing in return. The mechanism design literature suggests a more powerful framework for handling settings where multiple players have private information. Deploying the revelation principle, it entreats all players to reveal their private information ``to the mechanism''. It then determines the allocation of the goods and the monetary transfers between players and the mechanism. Any mechanism satisfying the constraints of \emph{individual rationality (IR)} and \emph{incentive compatibility (IC)} is deemed to be valid. Thus, it optimizes over this class to pick the best IR/IC mechanism for the seller. We call this the \emph{commitment game}. An optimal mechanism for the instance at hand happens to still take the form of Table~\ref{tabHierarchyMechanism}, but it charges higher prices to the Type 1 buyer: $X = 822$ and $Y = 774$.

Note that this mechanism is \emph{not} implementable by having the seller simply post prices as in the no-communication game. Problematic deviations arise for either menu:
\begin{enumerate}
	\item\label{itmDeviationA} If the seller posts the Type $A$ menu, then a Type 1 buyer would prefer the Type 3 option to get Good 2 for \$306, taking a surplus of \$24 rather than \$0.
	\item\label{itmDeviationB} If the seller posts the Type $B$ menu, then a Type 1 buyer would prefer the Type 2 option to get both goods for \$756, taking a surplus of \$66 rather than \$48.
\end{enumerate}
In the commitment game, however, the buyer does not know which of these menus the seller will be using when he reveals his type. Therefore, truthful revelation turns out to be incentive compatible, as each menu requires a different deviation. Instead of the no-communication profit of \$335, the seller now obtains an ex ante profit of \$342, which we call the \emph{commitment profit}.

As the name suggests, we view this prediction skeptically because it requires an unrealistic level of commitment power from the seller. Suppose that the seller simulates the game according to the revelation principle. If the buyer reveals he is Type 1, the seller is supposed to charge \$822 if she has Type $A$ and \$774 if she has Type $B$. Clearly, a rational seller will always claim to have Type $A$ regardless, charging \$822, which is the maximum amount a Type 1 buyer is willing to pay. The game unravels.

We thus return to our fundamental starting question: What is the seller's optimal profit? Is it merely the no-communication profit of \$335, or the commitment profit of \$342? We argue that the correct answer in this case turns out to lie strictly in between; \$339 to be exact. This is the result of a dynamic mechanism whereby the seller asks the buyer which good he values more and then posts a menu of prices contingent on the buyer's response and her own private cost. The buyer's response is cheap talk; thus we refer to this game as a \emph{cheap talk game} and the optimal profit of \$339 as the \emph{cheap talk profit}. The ultimate allocation and transfers work out to again take the same form as in the no-communication and commitment games, with $X = 822$ and $Y = 756$. The comparison of these solution concepts is summarized in Table~\ref{tabHierarchyComparison}.

\begin{table}[h!]
	\centering
	\begin{tabular}{|r|c c c c|} 
		\hline
		& Value of $X$ & Value of $Y$ & Seller profit (\$) & Buyer surplus (\$) \\
		\hline
		No-communication & 798 & 756 & 335 & 15 \\
		Cheap talk & 822 & 756 & 339 & 11 \\
		Commitment & 822 & 774 & 342 & 8 \\
		\hline
	\end{tabular}
	\caption{Comparison of prices and welfare in each solution concept for the instance from Table~\ref{tabHierarchyInstance}.}
	\label{tabHierarchyComparison}
\end{table}

There are several potential deviations that must be checked in this game, but none of them turns out to be profitable. The buyer does not wish to lie, and the seller's menu is optimal given that the buyer is truthful. The most crucial observation is that Deviation (\ref{itmDeviationA}) above is ruled out because the Type 3 option will not be on the menu, as a Type 1 buyer has already told the seller he does not have Type 3. But if $Y > 756$, we cannot also rule out Deviation (\ref{itmDeviationB}), which is why although the commitment game raises the value of $Y$, the cheap talk game does not.

In contrast to the example from Section~\ref{subExampleBuyerWelfareImprovement}, we note that, in this case, cheap talk does not result in a Pareto improvement for the two players. The seller's expected utility increases by \$4, the buyer's utility decreases by \$4, and the deadweight loss stays constant. Since this mechanism relies on the buyer's participation, a natural objection is that the buyer should simply stand mute. However, implicit in the framing of our mechanism design question is an assumption that the buyer cannot commit to silence. The seller can ask the buyer, ``Which good you value more? If you do not answer, I will assume you prefer Good 1.'' The seller's belief is credible at equilibrium, and the buyer will have to answer. In this sense, ``not communicating'' is not an option after all. This is a quirk not only of our model but more generally of dynamic mechanisms: players other than the designer can effectively be coerced into playing games they do not wish to play.

\section{Relation to Previous Literature}\label{secLitReview}

The notion that cheap talk can improve equilibrium outcomes is well-known and widely studied. The seminal work of Crawford and Sobel~\cite{CrawfordSobel} considers the canonical setting where a sender must communicate information to a Receiver, who then takes an action. This is similar to the more widely-studied Bayesian persuasion framework~\cite{NormalForm}, but the Sender lacks commitment power. Recent work has studied computational aspects of this one-way information design problem~\cite{AlgorithmicCheapTalk}. There has also been much prior work in economics on cheap talk in normal-form games where both parties take actions~\cite{NormalForm}, as well as models of cheap talk in specific economic settings~\cite{CheapTalkEcon1, CheapTalkEcon2, EfficientCompetition}.

However, none of these results applies to one of the most basic problems in economic theory: an interaction between a single seller and a single buyer. The conceptual finding of Riley and Zeckhauser~\cite{RileyZeckhauser} (also established by Myerson~\cite{MyersonAuction}) holds that, when there is one buyer, one seller, and one good, the seller should not engage in ``haggling.'' Rather, they should commit to a single price for the good. Notably, this is only shown for the setting where the only private information is on the side of the buyer. Our Theorem~\ref{thmImpossibility} generalizes this result to the case where the seller has private costs.

With two-sided private information, an approach taken in many prior works would view our problem as that of designing a \emph{bilateral trade mechanism}. In this model, a third party solicits the private information of each individual. The classic Myerson-Satterthwaite Theorem~\cite{MyersonSatterthwaite} shows that not all beneficial trades can be facilitated. (However, if the individual rational requirement can be relaxed to apply only \textit{ex ante}\textit{}\ the expected externality mechanism achieves full efficiency. In a negotiation, if the seller is paid the ``expected externality'' of their offer---the difference, if positive, between the buyer’s valuation and the seller’s cost---it induces both parties to report their true valuation. A sale results, e.g., at a price half way between the offers, whenever a mutually beneficial transaction exists~\cite{ExpectedExternalities}.)

Taking the seller's perspective, the question of optimal mechanism design in our setting sits within the \emph{informed principal} literature, where the designer possesses private information. We follow the framework set out in the seminal paper by Myerson~\cite{MyersonInscrutability}, where we assume without loss of generality that the mechanism does not depend on the seller's type. A large body of work has studied such mechanism design problems, most notably the papers by Maskin and Tirole~\cite{MaskinTiroleInformedPrincipal1, MaskinTiroleInformedPrincipal2}. More relevant to our bilateral trade context is the work of Yilankaya~\cite{Yilankaya}, which builds on a result of Williams~\cite{Williams}, showing a similar impossibility result as our Theorem~\ref{thmImpossibility}. However, they operate utilizing a stylized, continuous model and the theorem requires the regularity condition that both players' virtual values are increasing. We require no such assumptions. To the best of our knowledge, no prior work has shown the general impossibility of cheap talk in the classic Myerson-Satterthwaite setting with one buyer, one seller, and one good. Further afield, several works have looked at more complex information design problems with an informed seller and multiple buyers~\cite{EInformedPrincipal3, EInformedPrincipal4, EInformedPrincipal1}.

An alternative perspective, as we took in Section~\ref{subExampleMechDesign}, starts from the classic mechanism design setting and imposes additional constraints limiting the ability of parties to commit. Most relevant is the notion of \emph{credibility} from Akbarpour and Li~\cite{CredibleTrilemma}. Informally, a mechanism is credible if some buyer could detect any deviation by the seller. The canonical example of a non-credible mechanism is a sealed-bid second-price auction among multiple buyers. The seller can pretend to the winning buyer that another buyer bid just slightly lower, thereby extracting greater revenue.

In the context of a single buyer, credibility becomes a meaningful constraint if (and only if) the seller has private information. The seller's costs are unverifiable, so the seller cannot credibly condition any outcome on them. Any cheap talk game (formally defined in Section~\ref{secModel}) is a credible mechanism, but \emph{not} necessarily vice-versa. For instance, in the example from Section~\ref{subExampleMechDesign}, a variant of the optimal commitment mechanism can be implemented credibly. The only twist is that the Type 1 buyer is always charged \$798 for both goods regardless of the seller's cost (rather than being a coin flip between \$822 or \$774).\footnote{If we drop the requirement that seller's allocation satisfies ex post IR, then this transformation shows that \emph{any} commitment mechanism can be made credible. However, we do not know if the combination of credibility and ex post IR constraints can reduce the seller's optimal utility relative to their commitment profit. Computational experiments failed to find any such examples.} The Type 3 buyer is still always charged \$306. The Type 2 buyer, however, does not know what they are going to be charged, nor what goods they will receive. That depends on the seller's costs. While this mechanism does not leave room for the seller to profitably lie, it nevertheless requires a level of commitment beyond simply posting a menu. The two players must contractually agree that one of two types of transactions will take place (Good 1 for \$546 or both goods for \$756) at the choice of the seller. If the buyer were allowed to back out of this obligation and walk away with no goods and no payment, the overall mechanism would not be incentive compatible. The Type 1 buyer could pretend to be Type 2 and walk away if the seller is Type A. Thus, for contexts where both players lack commitment power, we believe that our cheap talk model better grasps reality than the credibility formulation of Akbarpour and Li~\cite{CredibleTrilemma}.

To the best of the authors' knowledge, there is only one prior work, by Banchio, Skrzypacz and Yang \cite{CredibleHiddenCost}, that has studied credibility in the context where the seller has private information. As with the rest of the credibility literature, they focus on settings involving multiple buyers. Whereas Akbarpour and Li identify the first-price auction as the unique optimal auction format that is static and credible, Banchio et al.~\cite{CredibleHiddenCost} show that, when the seller has hidden costs, first price auctions lack credibility. In fact, no optimal auction format can be simultaneously static and credible. Our work differs substantially, given our focus on the case of a single buyer (albeit with extensions in this context to previously unstudied multiple features of the market, such as multiple goods and interdependent values).

Alternative formulations of limited commitment have been proposed. Among the most relevant, McAdams and Schwarz~\cite{CredibleSales} studies an arguably weaker kind of seller who cannot even commit to closing a sale when desirable. Bester and Strausz~\cite{CommitmentAndRevelationPrinciple} consider a very general model where there is an arbitrary space of outcomes with a component to which the mechanism designer cannot commit. The commitment issues our work considers stem from a different source: the inability to contract outcomes based on private information held by the designer.

Our work connects as well to a large literature on market segmentation: A seller wishes to engage in third-degree price discrimination by offering different prices to different segments of the market. A highly influential paper by Benjamin, Brooks, and Morris~\cite{BBM} characterizes the buyer/seller welfare combinations attainable by an intermediary who knows the buyer's value selectively revealing information to the seller, who has a known cost. Hidir and Vellodi~\cite{EBuyerCommunicates} examine what happens when the buyer's value is private. In place of an omniscient intermediary, the buyer communicates via cheap talk. However, unlike in our paper, their setting constrains the seller to offer a single price for a single good, chosen out of a set of possible goods. Our setting allows for arbitrary selling mechanisms. The usefulness of cheap talk in our setting is driven by the private seller costs, rather than a constraint on the kinds of selling mechanisms allowed. 

There has also been work on improving market outcomes via \emph{verifiable} information disclosure, where the buyer can provably reveal information about their values. This is similar to our approach in that neither player has commitment power, but the potential for verification leads to significantly different results. Halpern, Kehne, and Tucker-Foltz~\cite{BuyersReveal} study the extent to which voluntary disclosure can improve buyer welfare and lead to Pareto efficient outcomes in settings with multiple goods. While voluntary disclosure is always possible in the context of a single good~\cite{AliLewisVasserman}, that property is lost when there are multiple goods. Curiously, our main findings for cheap talk are precisely the reverse. Mao, Paes Leme, and Wang~\cite{ItcsBilateralTrade} study a bilateral trade model where both sides may disclose verifiable information, finding that the optimal communication protocol achieves fully-efficient trade.

Finally, there is also a literature falling between sender-receiver games and bilateral trade, in which the seller's private information affects the buyer's value, rather than the seller's cost. Koessler and Skreta~\cite{EInformedPrincipal2} consider a similar setting to our ``interdependent values'' model in Section~\ref{subInterdependentValues}. A key difference, however, is that we assume the seller must reveal the relevant information to the buyer's value before the good is sold. Both of these models can be cast as settings with independent values and multiple goods (see Proposition~\ref{proReduction}), but in Koessler and Skreta's model, the seller would essentially have full commitment power in the multi-goods game. Several other works examine various models of a seller strategically revealing information to a buyer, with a conceptual takeaway that having multiple dimensions (e.g., product attributes) makes cheap talk useful~\cite{EMultidimensional2, EMultidimensional3, EMultidimensional4, EMultidimensional5}. This echoes our result that cheap talk is useful when their are multiple goods, or multiple variants of the same good.

We note that the mechanics of our cheap talk model fall under the label of \emph{long} cheap talk~\cite{LongCheapTalk}, where multiple rounds of communication are allowed, and can be demonstrably useful~\cite{CheapTalkEcon2, EMultiRounds}. Our solution concept, described in the next section, presumes a fixed but unlimited number of rounds of messages, alternating between the two players, followed by a selling mechanism chosen by the seller.  This is the simplest and most realistic model of cheap talk as it exists in real markets---haggling. But we bar public randomness and simultaneous messages. That restriction weakens our impossibility result and strengthens our possibility results.

\section{Model and Solution Concept}\label{secModel}

We begin by describing the basic setting, which we will modify as needed when we consider extensions in Section~\ref{secPositive}. For any positive integer $n$, we denote $[n] := \{1, 2, \dots, n\}$.

We consider a class of games involving two players, a seller and a buyer. There are $m$ goods, indexed by the letter $j$ and numbered 1 through $m$. There are $n$ possible buyer types, indexed by the letter $i$ and numbered 1 through $n$, and $\ell$ possible seller types, indexed by the letter $k$ and numbered 1 through $\ell$. The distributions of these types are independent and common knowledge. For each buyer type $i \in [n]$, we denote their value as a vector $v^i \in \rr_{\geq 0}^m$, and for each seller type $k \in [\ell]$, we denote their cost as $c^k \in \rr_{\geq 0}^m$. Thus, the buyer value and seller cost for good $j$ are respectively denoted $v^i_j$ and $c^k_j$.

We only consider games with a finite number of rounds $R$ and a finite number of messages $M$. For convenience, we index rounds backward, so that round $r = R$ is the first communication, and $r = 1$ is the final communication before the seller posts a menu. Formally, the \emph{$(M, R)$-cheap talk game} proceeds as follows.

\begin{enumerate}
	\item\label{itmGameBuyerObserves} The buyer observes his type $i$. Equivalently, he observes his value vector $v^i \in \rr_{\geq 0}^m$.
	\item\label{itmGameSellerObserves} The seller observes her type $k$ and corresponding cost vector $c^k \in \rr_{\geq 0}^m$.
	\item\label{itmGameCheapTalk} A series of $R$ cheap talk rounds occurs, from round $r = R$ down to round $r = 1$. On each such round $r$:
	\begin{itemize}
		\item If $r$ is even, the seller sends a message to the buyer by publicly choosing a message from the set $[M]$.
		\item If $r$ is odd, the buyer sends a message to the seller by publicly choosing a message from the set $[M]$.
	\end{itemize}
	Note that, since $r = 1$ is odd, this means the buyer will always be the last to communicate. We adopt this convention because a communication from the seller to the buyer is useless on the final round, as the buyer's final action will just be picking the bundle that maximizes his utility.
	\item\label{itmGameSellerPostsMenu} The seller posts a menu of bundles of goods and associated prices
	$$((x^1, p^1), (x^2, p^2), \dots, (x^n, p^n)),$$
	where each $x^i \in [0, 1]^m$ and $p^i \in \rr_{\geq 0} \cup \{\infty\}$. It is without loss of generality to assume that the seller's optimal menu consists of at most $n$ options, one for each buyer type. If there were more possible options, at least one would be suboptimal for all buyer types, so could be eliminated. In the special case of $m = 1$ good, it is without loss of generality to assume the seller posts a single price, $p$, in which case we suppress the $x$ notation and superscripts. Allowing for a price of $\infty$ is for notational convenience, giving a convenient way for the seller to refuse sale. Any sufficiently large finite price would have the same effect.
	\item\label{itmGameBuyerAccepts} The buyer chooses an option $i' \in \{0, 1, 2, \dots, n\}$. If he chooses $i' = 0$, then payoffs are $(0, 0)$. Otherwise, the buyer and seller payoffs are
	$$u^B := -p^{i'} + \sum_{j = 1}^m x^{i'}_j \cdot v^i_j \hspace{1cm} u^S := p^{i'} - \sum_{j = 1}^m x^{i'}_j \cdot c^k_j.$$
\end{enumerate}
When $R = 0$, we refer to this game as the \emph{no-communication game}.

Our equilibrium concept is that of subgame-perfect Bayes-Nash equilibrium. In all of our positive results about the existence of utility-improving cheap talk games, we construct pure strategy equilibria. Of our two negative results, one of them (Theorem~\ref{thmImpossibility}) holds even for mixed-strategy equilibria, and the other (Theorem~\ref{thmMultiRounds}) considers only pure-strategy equilibria.

\section{An Impossibility Theorem For the Case of One Good}\label{secImpossibility}

Our main impossibility theorem essentially states that, when there is only one good, cheap talk is not helpful to either the buyer or seller. However, it does \emph{not} say that signals are ignored. It turns out that there are equilibria where the seller sets different prices depending on the signal, just not in a manner that improves expected payoffs. This feature complicates the statement and proof of our theorem:

\begin{theorem}\label{thmImpossibility}
	For any $M$ and $R$, in any mixed-strategy, subgame-perfect, Bayes-Nash equilibrium of an $(M, R)$-cheap talk game with $m = 1$ good, there exists an interim payoff-equivalent equilibrium where every message is uninformative (i.e., does not change the other player's beliefs).
\end{theorem}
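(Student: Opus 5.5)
Since the communication rounds are nested, I would argue by induction on the number of rounds, working backward from the final communication (round $r=1$, a buyer message) toward round $R$. The inductive goal is the following claim: in any subgame that begins at round $r$, with the current beliefs given by some pair of posteriors, any continuation equilibrium can be replaced by one with uninformative messages that gives every buyer type and every seller type the same interim payoff. Once this holds for the full game, the theorem follows immediately. Seller messages (even rounds) are the easy case: each seller type's continuation payoff must be the same across all messages she sends with positive probability. By induction every continuation is payoff-equivalent to a babbling continuation. I would then let every seller type pool on one message, but pick a continuation for that message that reproduces each buyer type's expected payoff. This requires some care, because buyer payoffs after a seller message depend on the seller's type distribution conditional on that message.

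The core of the argument is the final round. Fix the beliefs over seller types and buyer types entering round $1$. Buyer type $i$ sends message $X$ with some probability. After message $X$, each seller type $k$ posts a price $p_k(X)$, possibly mixed, that is optimal against the posterior $F(\cdot\mid X)$. A buyer type can only gain from sending a message if it lowers the distribution of prices he faces. I would use the known fact that optimal prices are monotone in cost; specifically, the set of optimal prices is weakly increasing in $c$ in the strong set order, by the single-crossing property of $(p-c)(1-F(p))$.

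I would then order buyer types by value and run a doubly nested induction: an outer induction from the lowest value type upward, and an inner induction over seller cost types. The aim is to identify a single message $X$ that is weakly optimal for every buyer type. The key lemma is that if a high-value type $v$ strictly prefers message $X$ to $Y$, the IC constraints force $Y$'s posterior to contain relatively more low-value types, while prices after $Y$ are no lower than after $X$ at the relevant thresholds. This yields a contradiction unless the price distributions coincide below $v$.

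Given such an $X$, I would replace the equilibrium by one in which every buyer type sends $X$ (or babbles), and every seller type responds as she did after $X$. Two things then need checking. First, buyer types receive the same payoff, because $X$ was a best message for all of them. Second, the seller's response to $X$ must remain optimal against the prior, which is a mixture of the posteriors after the different messages. Optimality against a mixture follows if each $p_k(X)$ is optimal after every message used, which is what the price-equivalence lemma should deliver. Seller payoffs are preserved for the same reason.

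I expect the main obstacle to be exactly this lemma: showing that a common message $X$ exists whose induced prices are simultaneously best for all buyer types and optimal for all seller types under the pooled belief. Mixed seller strategies and ties among optimal prices make the monotone comparative statics delicate. I would first try to prove that the lowest price in the support, $\min_X p_k(X)$, is achieved by the same $X$ for every $k$, using the cost-monotonicity and induction on $k$. From there I would argue upward in buyer value.
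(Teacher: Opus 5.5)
Your outer architecture matches the paper's. You reduce to the final buyer round, find a signal $X$ that is weakly best for every buyer type, let the buyer babble while each seller type responds as she did to $X$, and push back through earlier rounds. Your mixing construction for seller-message rounds is more explicit than the paper's one-line remark. Your check that a price optimal against each posterior is optimal against the mixture is a clean alternative to the paper's deviation argument for the seller.

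The gap is the core lemma. You state it as a goal rather than prove it, and the route you sketch lacks the idea that makes it work. Cost-monotonicity compares seller types facing the \emph{same} posterior. It says nothing about $p_k(X)$ versus $p_k(Y)$ for different messages, whose posteriors need not be ordered. So an inner induction over seller cost types cannot produce the cross-message comparison. Likewise, ``$Y$'s posterior has relatively more low-value types'' is too weak: without regularity, a stochastically lower pool implies no price comparison. What is needed, and true, is sharper: every buyer type above some threshold sends $Y$ with probability zero.

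In the paper both inductions run over buyer values. Normalize prices to $\{v^1,\dots,v^n,\infty\}$, with $p=v^k$ used after a message only if type $k$ may have sent it.
\begin{itemize}
\item Outer step: suppose $Y$ fails to maximize $\Pr[p=v^i\mid\cdot]$, and let $Z$ be a maximizer. Then $\Pr[p=v^i\mid Z]>0$, so by the outer hypothesis $Z$ also maximizes $\Pr[p=v^j\mid\cdot]$ for every $j<i$.
\item Inner step, by induction on $k>i$: assume $\Pr[p=v^j\mid Y]=0$ for $i<j<k$. Then type $v^k$ gets strictly more from $Z$ than from $Y$. This holds because $Z$ puts weakly more mass at each $v^j$ with $j<i$, strictly more at $v^i$, and $Y$ puts nothing strictly between $v^i$ and $v^k$. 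So type $k$ never sends $Y$, hence $\Pr[p=v^k\mid Y]=0$.
\end{itemize}
This gives each $Y$ a threshold $f(Y)$. Below it, $Y$ ties the maximizers; above it, the seller never posts a finite price after $Y$. Choose $X$ to maximize $f$, breaking ties by mass at $v^{f(X)}$. Then $\Pr[p=v\mid X]\ge\Pr[p=v\mid Y]$ for every finite $v$.

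Only then does cost-monotonicity enter. It converts this aggregate statement into a per-type one: low-cost seller types act identically after $X$ and $Y$, and high-cost ones post $\infty$ after $Y$.

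Note that the conclusion is not price equivalence, as your final steps presume. Take a seller type with $p_k(Y)=\infty$. The price $p_k(X)$ is still optimal after $Y$, but only because $p_k(X)\ge c^k$ makes its profit there nonnegative, while optimality of $\infty$ caps that profit at zero. This is exactly what your mixture argument and the seller's payoff equivalence require.
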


For example, consider a simple (2, 1)-cheap talk game where the seller deterministically has cost 1, and the buyer has value 0 or 2, each with equal probability $\frac12$. The buyer could send signal $X$ when he has value 2 and $Y$ when he has value 0. The seller could optimally respond by setting $p = 2$ upon seeing signal $X$ and $p = \infty$ upon seeing signal $Y$. Thus the signals are not ignored. However, for the buyer, $X$ weakly dominates $Y$ across buyer types, and there is a payoff-equivalent equilibrium where the seller ignores the signal and just posts a price of 2.

The key claim in our proof is that, in any state of the game at round $r = 1$, there is always a signal like $X$ that behaves precisely as in this example with respect to any other signal $Y$. Deviating from $X$ to $Y$ is essentially a declaration from the buyer to the seller of the form, ``My type is so low that you should not even bother trying to sell to me if your cost is at least $c$.'' Identifying $X$ and establishing that it has this structure is the most technically challenging part of the proof, which requires a doubly-nested induction on the possible buyer types. Once established, a careful (though straightforward) analysis shows that we can modify the equilibrium so that the seller always pretends the buyer sent signal $X$, without breaking the sequential rationality or Bayesian consistency properties.

\begin{proof}
	Fix an equilibrium of an arbitrary $(M, R)$-cheap talk game. We will show that there is an interim payoff-equivalent equilibrium where the buyer's signal in the final round $r = 1$ is ignored. If $R \geq 2$, observe that the seller's final message in round $r = 2$ becomes irrelevant as well, given that the buyer's final communication is ignored. It thus follows inductively that all communication can be ignored.
	
	Assume the possible values for the $n$ types are numbered in order $0 \leq v^1 < v^2 < \dots v^n$. First we observe that, given any signal, it is without loss of generality to assume the seller always sets a price in $\{v^1, v^2, \dots, v^n, \infty\}$ (where $\infty$ is a stand-in for any value larger than the maximum cost or value). For given any price not in this set, if there is no larger $v^i$ that the buyer could have, then the seller could equivalently just set $p = \infty$ instead. Otherwise, the seller could increase the price to the next largest plausible $v^i$, which will yield greater revenue with the same probability.
	
	Fix an equilibrium $E_1$. Within this equilibrium, consider an arbitrary node in the extensive-form game tree on round $r = 1$, where the buyer holds a posterior probability distribution over the seller's possible types according to $E_1$. The buyer must choose which signal to send on this final round. For any signal $Z$ and value $v$, let $\Pr[p = v \suchthat Z]$ denote the probability that, when the buyer sends $Z$ at the given node in $E_1$, the seller sets price $p = v$. We claim that, for any signal $Y$, there is an index $i \in [n]$ such that:
	\begin{enumerate}[(i)]
		\item\label{itmSigYArgMax} For any index $1 \leq k < i$, signal $Y$ maximizes $\Pr[p = v^k \suchthat Z]$ over all possible signals $Z$, and this probability is nonzero.
		\item\label{itmSigYNothingAbove} For all indices $i < k \leq n$, $\Pr[p = v^k \suchthat Y] = 0$.
	\end{enumerate}
	
	It suffices to show that, given any signal $Y$ and index $i$, if $Y$ is not the signal that maximizes $\Pr[p = v^i \suchthat Z]$ over signals $Z$, then property (\ref{itmSigYNothingAbove}) holds for that index $i$. (Then we can just take $i$ to be the first index where this happens, and property (\ref{itmSigYArgMax}) will hold as well.) We prove this statement holds by induction on $i$. So let $i \in [n]$ be given, and assume it is true for all strictly smaller values (if there are any). To establish that property (\ref{itmSigYNothingAbove}) holds, we proceed by induction on $k$; i.e., fix any $k$ where $i < k \leq n$, and suppose property (\ref{itmSigYNothingAbove}) holds for all indices strictly between $i$ and $k$ (if there are any). Note that we have two inductive hypotheses, one for the index $i$ and one for the index $k$.
	
	Consider what happens if a buyer with value $v^k$ sends signal $Y$ versus some other signal $Z$ maximizing $\Pr[p = v^i \suchthat Z] > 0$. By the inductive hypothesis for index $i$, we know that $Z$ must maximize $\Pr[p = v^j \suchthat Z]$ for each index $1 \leq j < i$, for otherwise we would have $\Pr[p = v^k \suchthat Z] = 0$. In particular, this means that for each index $1 \leq j < i$, the probability $p = v^j$ is at least as high if the buyer sends signal $Z$ as if they send signal $Y$. For $j = i$, we have a strict inequality. Thus, we may bound the expected utility of the buyer towards sending signal $Z$ as
	\begin{align*}
		\ee[u^B \suchthat Z] &= \sum_{j = 1}^{i} \Pr[p = v^j \suchthat Z] (v^k - v^j) + \sum_{j = i + 1}^{k - 1} \Pr[p = v^j \suchthat Z] (v^k - v^j)\\
		&> \sum_{j = 1}^{i} \Pr[p = v^j \suchthat Y] (v^k - v^j) + \sum_{j = i + 1}^{k - 1} \Pr[p = v^j \suchthat Z] (v^k - v^j)\\
		&\geq \sum_{j = 1}^{i} \Pr[p = v^j \suchthat Y] (v^k - v^j) + \sum_{j = i + 1}^{k - 1} (0)(v^k - v^j)\\
		&= \sum_{j = 1}^{i} \Pr[p = v^j \suchthat Y] (v^k - v^j) + \sum_{j = i + 1}^{k - 1} \Pr[p = v^j \suchthat Y](v^k - v^j) \stextn{by the inductive hypothesis for index $k$}\\
		&= \ee[u^B \suchthat Y].
	\end{align*}
	Hence, the buyer with value $v^k$ strictly prefers to send signal $Z$, so he will never send signal $Y$. At equilibrium, the seller will therefore infer from a $Y$ signal that the buyer does not have value $v^k$. It makes no sense for the seller to set a price of $p = v^k$ by the same logic as before. She should either raise the price to $\infty$ or the next plausible buyer value. Thus, $\Pr[p = v^k \suchthat Y] = 0$, so by induction on $k$, property (\ref{itmSigYNothingAbove}) holds for all $i < k \leq n$.
	
	Thus, for any signal $Y$, we are always able to find some index $i$ satisfying properties (\ref{itmSigYArgMax}) and (\ref{itmSigYNothingAbove}). We denote the largest such index $i$ by $f(Y)$. We define $X$ to be any signal maximizing $f(X)$, breaking ties in favor maximizing the probability of $v^{f(X)}$. (Note that $X$ is defined with respect to a specific state within the game under equilibrium $E_1$. In what follows, we will abuse notation and refer to ``signal $X$'' as whatever message that corresponds to for the given state, which will be clear from context.) We know that, for any other signal $Y$, the distributions over prices given signals $X$ and $Y$ (where the randomness is over the seller's cost) have the following structure. For $1 \leq i < f(Y)$, we have $\Pr[p = v^i \suchthat X] = \Pr[p = v^i \suchthat Y]$, then for $i = f(Y)$ we have an inequality $\Pr[p = v^i \suchthat X] \geq \Pr[p = v^i \suchthat Y]$, and finally for all $f(Y) < i \leq n$, we have $\Pr[p = v^i \suchthat Y] = 0$.
	
	Since $\Pr[p = v \suchthat X] \geq \Pr[p = v \suchthat Y]$ for all $v < \infty$, switching from signal $X$ to signal $Y$ has the effect of transferring the probability of the price being a given value away from the highest values and toward infinity. Moreover, we can deduce which seller types account for this move towards infinity by invoking the well-known result that the seller's pricing schedule is a monotone function of her cost. Let
	$$q^* := \sum_{i = 1}^{f(Y)} \Pr[p = v^i \suchthat Y].$$
	Monotonicity implies that the bottom $q^*$-fraction of seller types behave identically in the old equilibrium given signal $X$ versus signal $Y$, whereas the top $(1 - q^*)$-fraction of seller types set $p = \infty$ under signal $Y$.
	
	Using these structural observations, we will show that the following alternative set of strategies and beliefs constitute an alternative subgame-perfect Bayes-Nash equilibrium, which we denote $E_2$:
	\begin{itemize}
		\item[] \textbf{Strategies:} On all rounds $r \geq 2$, the players play identically in $E_2$ as in $E_1$. On round $r = 1$, the Buyer sends a uniformly random signal. The seller responds as she would in $E_1$ if the buyer had sent signal $X$.
		\item[] \textbf{Beliefs:} Before every message is sent, beliefs are the same in $E_2$ as in $E_1$. After the final buyer message is sent, the seller does not update her beliefs at all.
	\end{itemize}
	To conclude the proof of the theorem, it suffices to verify the following two claims:
	\begin{enumerate}[(a)]
		\item\label{itmNewEquilibriumBetter} All types of both players expect to receive the same payoffs under $E_2$ as they did under $E_1$.
		\item\label{itmNewEquilibriumIsEquilibrium} $E_2$ is a subgame-perfect Bayes-Nash equilibrium.
	\end{enumerate}
	
	We begin with (\ref{itmNewEquilibriumBetter}). For the buyer, we observe that, in the final round of communication under $E_1$, the distribution over prices of any other signal $Y$ stochastically dominates the distribution over prices given $X$. This implies $X$ is weakly better for all buyer types, so his payoff under $E_2$ is weakly greater than it is under $E_1$. But it cannot be strictly greater for any buyer type, for otherwise the buyer would have deviated in $E_1$. Hence, every buyer type receives the same expected utility in $E_2$ as in $E_1$. Likewise, we claim that every seller type is weakly better-off from their response to the $X$ signal under $E_1$. As we have noted, switching from $Y$ to $X$ results in some seller types posting finite prices rather than $\infty$, with nothing else changing. Seller types posting a price of $\infty$ were getting zero utility anyway. Thus, all seller types receive a weakly higher payoff by responding as if the buyer sent signal $X$, as they do under $E_2$. But it cannot be strictly greater for any seller type, for otherwise the seller would have deviated in $E_1$. Hence, every seller type receives the same expected utility in $E_2$ as in $E_1$, concluding the proof of (\ref{itmNewEquilibriumBetter}).
	
	To prove (\ref{itmNewEquilibriumIsEquilibrium}), first observe that $E_2$ is clearly Bayesian consistent given that $E_1$ was. (The only possibly new beliefs that we need to check are after the final message has been sent. This message conveys no information under equilibrium play, so beliefs are not updated.) All that remains to show is that neither player has an incentive to deviate in any subgame.
	
	Fix an arbitrary subgame, and first consider the buyer's incentives. Suppose the buyer achieves expected utility $u$ from his equilibrium strategy $s$ under equilibrium $E_2$. From (\ref{itmNewEquilibriumBetter}), he also achieves utility $u$ under equilibrium play in $E_1$. Consider an alternative strategy $s'$, yielding expected utility $u'$ under $E_2$. Let $s''$ be the same strategy as $s'$, except that the buyer sends signal $X$ in the final round. Clearly, $s''$ yields the same utility $u''$ under $E_1$ as under $E_2$, since the seller responds as if the buyer sent signal $X$ in $E_2$ anyway. Thus, we have
	$$u' \leq u'' \leq u,$$
	where the first inequality follows from the fact that prices under the $X$ response are stochastically dominated by prices under any other response, and the second inequality follows from the fact that $E_1$ is an equilibrium. Thus, $E_2$ is sequentially rational for the buyer.
	
	For the seller, we invoke a similar argument. Suppose she achieves expected utility $u$ from her equilibrium strategy $s$ under equilibrium $E_2$ (and the same utility $u$ under her equilibrium strategy in $E_1$). Consider an alternative strategy $s'$, yielding expected utility $u'$ under $E_2$. Let $s''$ be the same strategy as $s'$ except that, when posting a menu at the end, the seller pretends that the buyer sent a uniformly random message, then responds according to $s'$. At equilibrium $E_2$, $s''$ yields the same distribution over outcomes (since the buyer is already randomizing over messages), and thus the same utility $u'' = u'$ for the seller. However, by its definition, $s''$ has the additional property that it yields the same expected utility for the seller in $E_1$, as the only discrepancy between the buyer's strategy in $E_1$ versus $E_2$ comes at the final round $r = 1$, but $s''$ ignores the buyer's signal anyway. Thus, we have
	$$u' = u'' \leq u,$$
	where the inequality again follows from the fact that $E_1$ is an equilibrium. Thus, $E_2$ is sequentially rational for the seller as well. This concludes the proof.
\end{proof}

\section{Settings Where Cheap Talk Creates Value}\label{secPositive}

We now consider several extensions that go beyond the simple hypotheses of Theorem~\ref{thmImpossibility}. In each, we describe explicit examples in which cheap talk games yield higher profit and/or consumer surplus than in the no-communication game. We verify Several of our results with computations. A subset of those were discovered via computational search over random instances. We begin by discussing this approach.

\subsection{Computational Approach}\label{subLP}

In the absence of cheap talk, it is well known that the seller's optimal menu design problem can be formulated as a linear program, optimizing the expected profit subject to the individual rationality (IR) and incentive compatibility (IC) constraints:\\

\lpmax{$$\sum_{i = 1}^n \Pr[i] \left(p^i - \sum_{j = 1}^m x^i_j c^k_j\right)$$}
{
	$x^i_j \in [0, 1]$ & for all $i \in [n]$, $j \in [m]$\\
	$p^i \geq 0$ & for all $i \in [n]$\\
	$\sum_{j = 1}^m x^i_j v^i_j - p^i \geq 0$ & for all $i \in [n]$ (IR constraint)\\
	$\sum_{j = 1}^m x^i_j v^i_j - p^i \geq \sum_{j = 1}^m x^{i'}_j v^i_j - p^{i'}$ & for all $i, i' \in [n]$ (IC constraint)\\
}\\

This can sometimes yield solutions with fractional allocations, which is a broader class of mechanisms than we study in this paper. However, in all of our examples, the optimal solutions happen to be integral.

We may use this LP as a subroutine to verify pure-strategy equilibria to 1-round cheap talk games as follows. Any such equilibrium induces a partition of the buyer types defined by which signal each type sends. For each part in the partition, and each seller type, we compute the optimal menu using the LP, restricted to the set of types in the part. This gives us a distribution over menus for each possible signal that the buyer can send, where the randomness is over the seller's unknown type. We then verify that each buyer type gets weakly higher expected utility from the distribution over menus associated with that buyer signal than any other signal. If so, we have a valid equilibrium.

\subsection{Multiple Goods, Multiple Units}\label{subMultiUnits}

The aforementioned approach enabled us to search for and verify the existence of cheap talk equilibria that benefit both the buyer and seller. This approach was used to discover the example from Section~\ref{subExampleMechDesign}. Further modifications to the LP enabled us to compute the seller-optimal bilateral trade mechanism yielding the even higher commitment profit.

If we add the constraint
$$x^i_1 \geq x^i_2 \ \ \txt{for all $i \in [n]$}$$
we can represent settings where there are two units of the same good to be sold, and so the first unit (good 1) must be sold before the second unit (good 2). We are thus able to verify that cheap talk can be useful in the instance in Table~\ref{tabUnits}, where the seller has a constant marginal cost per unit, and the buyer has a diminishing marginal cost. The no-communication game confers a seller profit of \$21.50. There is a one-round cheap talk equilibrium in which the buyer reveals their marginal value for the second unit, and the seller profit rises to \$22.5.

\begin{table}[h!]
	\centering
	\begin{tabular}{|r|c c|c c c c|} 
		\hline
		& \multicolumn{2}{c|}{Seller cost (\$)} & \multicolumn{4}{c|}{Buyer value (\$)}  \\		
		& Type 1 & Type 2 & Type 1 & Type 2 & Type 3 & Type 4 \\
		\hline
		First unit & 20 & 60 & 50 & 80 & 40 & 70\\
		
		Second unit & 20 & 60 & 40 & 40 & 10 & 10\\
		\hline
	\end{tabular}
	\caption{An instance with two identical units of a good being sold. All seller/buyer types are equally likely.}
	\label{tabUnits}
\end{table}

\subsection{Repeated Play}\label{subRepeatedPlay}

We now consider a second extension that departs from the basic model: the entire game is infinitely repeated with some discount factor $0 < \delta < 1$. It is well-known that cooperation can be rational and bring welfare gains in such settings, without needing to bring in cheap talk. For instance, the parties could agree to always trade at a fixed price because it is efficient to do so, even if it happens at times to be above the buyer's value or below the seller's cost. To make the setting less trivial, and to distinguish the roles of repeat-play cooperation and cheap talk specifically, we impose an additional constraint: the players' decisions in each round must to be individually rational. No player can reap negative utility in any round.

With this setup, we consider the following instance with a single good. The seller always has cost zero, and the buyer has cost \$4 or \$9, each with probability $\frac12$ (independently drawn across rounds). The discount factor is $\delta = 0.9$.

First note that, in the (repeated) no-communication game, the seller must always set a price of \$9 because that is her optimal price in the one-shot game. The seller may sometimes choose to offer the buyer a lower price, but there is no room for reciprocity because the individual rationality constraint makes it impossible for the buyer to behave any differently than he otherwise would to the benefit the seller.

However, there is an equilibrium of a repeated $(2, 1)$-cheap talk game that yields higher welfare for both parties. On rounds 1, 5, 9, 13, 17, \dots, the buyer reveals his value completely. Then the seller charges that price on the current round and the other price (\$4 or \$9) on the following 3 rounds. (The buyer doesn't meaningfully communicate in those following three rounds.) If any player deviates (including the buyer deciding not to buy even after telling the seller he had a high value), we resort to the equilibrium where the seller always charges \$9.

\textbf{Buyer's incentives.} The only time a buyer could plausibly deviate is by pretending to have a low value when his value is actually high. If so, his utility from the current round will be 5, but will get zero utility on the next three rounds. On the other hand, if the buyer reports honestly, he will achieve zero utility on the current round but a discounted expected utility of
$\$2.5\times(0.9 + 0.9^2 + 0.9^3) = \$6.0975$
over the next three rounds. Thus, the buyer has no desire to deviate.

\textbf{Seller's incentives.} The seller stands to benefit the most from deviating on the first of the three rounds, when she has promised to set price of \$4. By setting a price of \$9 instead she obtains an expected utility of \$4.5 on this round, and on all future rounds. By continuing to play honestly, she obtains utility \$4 from the next three rounds and \$6.5 on the next communication round. On subsequent blocks of 4 rounds, the utilities will be at least this high (they could be higher if the buyer reports a low value). The discounted expected utility per block from playing honestly is thus
$$\$4\times(1+0.9 + 0.9^2) + \$6.5\times(0.9^3) = \$15.5785,$$
and from deviating is
$$\$4.5 \times (1+0.9 + 0.9^2 + 0.9^3) = \$15.4755.$$
Hence, adherence, not deviation, is also optimal for the seller.

Thus, we have shown that both players are better off in the equilibrium of the repeated $(2, 1)$-cheap talk game than in any equilibrium of the repeated no-communication game.

\subsection{Interdependent Values}\label{subInterdependentValues}

Our next extension assumes that the buyer's value depends not only on his own type, but on the seller's type as well, as would happen if there was an element of common value. In other words, we have values $v^{i, k}_j$ rather than $v^i_j$. We also require the seller to disclose this private information to the buyer before closing the sale. (This is a realistic assumption in light of the example from Section~\ref{subExampleBuyerWelfareImprovement}.) Formally, we assume that the buyer learns the seller's type $k$ during step (\ref{itmGameSellerPostsMenu}) of the cheap talk game. As it turns out, this extension can be thought of in terms of our basic model.

\begin{proposition}\label{proReduction}
	For any interdependent values setting (as described above) with $m$ goods and $\ell$ seller types, there is an equivalent independent-values instance with $m \cdot \ell$ goods (and the same numbers of buyer and seller types), in the sense that both instances admit the same equilibrium utilities in $(M, R)$-cheap talk games for any $M$ and $R$.
\end{proposition}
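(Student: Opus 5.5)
The reduction creates one copy of each original good for every seller type, and makes the seller's costs rule out trading any copy that does not match her true type. Index the new goods by pairs $(j,k)$ with $j\in[m]$ and $k\in[\ell]$. Buyer type $i$ values good $(j,k)$ at $\tilde v^i_{(j,k)} := v^{i,k}_j$, which depends only on $i$, so values are independent. Seller type $k$ has cost $\tilde c^k_{(j,k)} := c^k_j$ on her own copies. On every copy $(j,k')$ with $k'\neq k$ she has cost $H$, where $H$ is larger than every buyer value in the instance. The type spaces and priors are unchanged. The idea is that good $(j,k)$ stands for ``good $j$ as it really is when the seller has type $k$.'' Seller type $k$ would never profitably sell a foreign copy. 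When she posts a menu built only from copies $(\cdot,k)$, she effectively reveals $k$, mirroring the disclosure requirement of Section~\ref{subInterdependentValues}.

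The plan is to build a payoff-preserving map between strategy profiles of the two $(M,R)$-cheap talk games in both directions, then check that it preserves equilibrium.

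\textbf{Interdependent to independent.} Given an equilibrium of the interdependent game, seller type $k$ posts menu $((x^1,p^1),\dots)$. In the new game she instead posts the menu that places allocation $x^i_j$ on good $(j,k)$ and $0$ on every other copy, at the same prices. Communication strategies stay the same.
\begin{itemize}
\item The buyer's utility from each option equals $\sum_j x_j v^{i,k}_j - p$, exactly as in the original game once he has learned $k$.
\item The seller's costs match.
\end{itemize}
So payoffs agree on every path, and deviations in communication map one-to-one.

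\textbf{Independent to interdependent.} Start from an arbitrary equilibrium of the new game. I would first argue that, without changing any payoffs, each seller type $k$ can be assumed to put zero allocation on foreign copies. An option with $x_{(j,k')}>0$ for some $k'\neq k$ and price below $H\cdot x_{(j,k')}$ gives the seller negative profit if bought. Removing that component and lowering the price by the buyer's value for it gives the buyer the same utility from that option and makes the seller weakly better off. This holds in every subgame.

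This is the main obstacle. Removing a component changes the buyer's utility only in the option it belongs to, so his choice among options does not change and the seller's profit rises weakly. But the seller was already best-responding, so her profit cannot rise strictly. Hence no foreign copy is ever bought with positive profit loss, and we may normalize without loss. The one subtlety is an option containing foreign copies that no buyer type ever chooses. Such options can simply be deleted, since deletion does not affect incentives.

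After this normalization, a menu of seller type $k$ lives on the copies $(\cdot,k)$ and maps back to an interdependent menu. The buyer's payoff for each option coincides with his payoff under value $v^{i,k}$. This is exactly what knowledge of $k$ at step (\ref{itmGameSellerPostsMenu}) provides, since the menu's support reveals $k$ on path. Off path, a seller type $k$ could deviate in the new game to a menu on foreign copies. That deviation is dominated by its normalized version, which corresponds to a feasible interdependent deviation. So any profitable deviation in one game yields a profitable deviation in the other. Beliefs carry over unchanged, so Bayesian consistency transfers. Therefore the two games have the same sets of interim equilibrium utilities.
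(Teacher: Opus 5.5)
\textbf{There is a genuine gap: your choice of $H$ is too small, and the normalization lemma built on it fails.} Your construction is the paper's: copies $(j,k)$ with $\tilde v^i_{(j,k)}=v^{i,k}_j$, and a prohibitive cost on foreign copies. The paper makes foreign copies cost $\infty$. You only take $H$ above every buyer value. That is not enough, because a foreign copy that is never worth selling alone can still be worth bundling as a screening device.

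\textbf{Counterexample.} Let $m=1$, $\ell=2$, and write $A=(1,1)$, $B=(1,2)$. Seller type $1$ has cost $0$ on $A$ and $H=12$ on $B$; seller type $2$ is symmetric, say. Buyer type $h$ (probability $0.9$) has $v^{h,1}_1=10$ and $v^{h,2}_1=0$. Buyer type $l$ (probability $0.1$) has $v^{l,1}_1=5$ and $v^{l,2}_1=10$.
\begin{itemize}
\item In the interdependent no-communication game, seller type $1$ sells one good to a buyer valuing it at $10$ or $5$. Her optimal profit is $9$.
\item In your reduced instance she can post $A$ at $10$ and the bundle $A+B$ at $15$, perturbing prices slightly to break ties. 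Type $h$ buys $A$, since the bundle is worth only $10$ to him. Type $l$ buys the bundle at margin $15-12=3$. Her profit is $9.3>9$.
\end{itemize}
So with $H$ merely above all buyer values, the two instances do not have the same equilibrium utilities.

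\textbf{Where your argument breaks.} The failing step is ``removing that component and lowering the price by the buyer's value for it.'' Different buyer types value the foreign copy differently ($0$ versus $10$ above). No single price change keeps every type's utility for that option fixed. Your claim that the buyer's choices are unchanged is exactly what the screening effect violates. Your lemma also only covers options priced below $H x_{(j,k')}$. The bundle above is priced above its foreign cost and is profitable.

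\textbf{The repair.} Set $H=\infty$, as the paper does. Then any menu under which a foreign copy is bought with positive probability yields $-\infty$. That is strictly worse than posting nothing, in every subgame and under every posterior, so the normalization is immediate. A finite $H$ can also work, but it must be chosen from the whole instance so that it beats the screening value of foreign copies uniformly over all posteriors. That needs its own argument, e.g.\ via the vertices of the IR/IC polytope.

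With $H=\infty$, the rest of your two-directional correspondence (strategies, payoffs, deviations, beliefs) goes through. It is then a more detailed version of the paper's one-paragraph sketch.
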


\begin{proof}
	For each good $j$ in the original instance, we create goods $(j, 1), (j, 2), \dots, (j, \ell)$, where $v^i_{(j, k)}$ is defined as $v^{i, k}_{j}$. The seller's cost is infinity (or some suitably high number) for any good $(j, k)$ where $k$ is not the seller's type. Thus the seller only has one such good from the set $\{(j, k) \suchthat k \in [\ell]\}$ for sale, and the buyer's value for it depends on the seller's type in the exact same way as in the original interdependent values instance.
\end{proof}

This proposition gives us a way to verify claims about interdependent values instances, which is how we checked that cheap talk improved both buyer and seller utilities in the Example from Section~\ref{subExampleBuyerWelfareImprovement}.

\subsection{Multiple Rounds of Communication}\label{subMultiRounds}

The examples we have considered thus far involved a single round of communication from the buyer to the seller. Our final result shows that multiple rounds of communication can yield even greater profit for the seller.

First, consider the interdependent values instance, which we can again phrase as a car-purchasing scenario. There are three kinds of buyers. Type 1 buyers will purchase either a red car or blue car for \$90. Type 2 buyers value only red cars, at \$60, and Type 3 buyers value only blue cars, also at \$60. The seller has only one type of car, with cost zero. We call this instance $A$, which we depict in Table~\ref{tabMultiRoundsA} as a multiple-goods instance using the transformation from Proposition~\ref{proReduction}.

\begin{table}[h!]
	\centering
	\begin{tabular}{|r|c c|c c c|} 
		\hline
		& \multicolumn{2}{c|}{Seller cost (\$)} & \multicolumn{3}{c|}{Buyer value (\$)}  \\		
		& Has red & Has blue & Type 1 & Type 2 & Type 3 \\
		\hline
		Red car & 0 & 100 & 90 & 60 & 0\\
		
		Blue car & 100 & 0 & 90 & 0 & 60\\
		\hline
	\end{tabular}
	\caption{Buyer values and seller costs in instance $A$.}
	\label{tabMultiRoundsA}
\end{table}
\vspace{-.7cm}
\begin{table}[h!]
	\centering
	\begin{tabular}{|r|c c c c c c|c c c c c c c c c|} 
		\hline
		& \multicolumn{6}{c|}{Seller cost (\$)} & \multicolumn{9}{c|}{Buyer value (\$)}  \\		
		& 1 & 2 & 3 & 4 & 5 & 6 & 1 & 2 & 3 & 4 & 5 & 6 & 7 & 8 & 9 \\
		\hline
		Good 1 & 0 & 100 & 100 & 100 & 100 & 100 & 90 & 90 & 90 & 60 & 60 & 60 & 0 & 0 & 0\\
		Good 2 & 100 & 0 & 100 & 100 & 100 & 100 & 90 & 90 & 90 & 0 & 0 & 0 & 60 & 60 & 60\\
		Good 3 & 100 & 100 & 0 & 100 & 100 & 100 & 90 & 60 & 0 & 60 & 0 & 90 & 60 & 0 & 90\\
		Good 4 & 100 & 100 & 100 & 0 & 100 & 100 & 90 & 0 & 60 & 0 & 60 & 90 & 0 & 60 & 90\\
		Good 5 & 100 & 100 & 100 & 100 & 0 & 100 & 90 & 0 & 60 & 60 & 90 & 0 & 90 & 0 & 60\\
		Good 6 & 100 & 100 & 100 & 100 & 100 & 0 & 90 & 60 & 0 & 0 & 90 & 60 & 90 & 60 & 0\\
		\hline
	\end{tabular}
	\caption{Buyer values and seller costs in instance $B$.}
	\label{tabMultiRoundsB}
\end{table}

In the no-communication game, the seller faces a buyer whose value is equally likely to be \$0, \$60, or \$90, so she prices at \$60, for a total surplus of $\$60 \cdot \frac23 = \$40$. With one round of cheap talk, she can require the buyer to identify whether or not he is type 3. In this case it turns out that this yields an equilibrium where the seller's expected profit rises to \$45.

Building on this example, now consider instance $B$ in Table~\ref{tabMultiRoundsB}. We have the following result.

\begin{theorem}\label{thmMultiRounds}
	In instance $B$, the greatest profit in a pure-strategy equilibrium of a 1-round cheap talk game is \$42.50, but there is a 2-round cheap talk game with a pure-strategy equilibrium yielding profit \$45.
\end{theorem}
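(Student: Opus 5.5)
The plan has two parts: an explicit 2-round construction achieving \$45, and a bound showing that no pure-strategy equilibrium of any 1-round game beats \$42.50. Throughout I take all types to be equally likely, as in Table~\ref{tabUnits}. Since seller type $k$ has cost $0$ for good $k$ and cost $100$ (above every value) for all other goods, each seller type is effectively a single-good seller of good $k$. Her behaviour after any history is therefore just a posted price drawn from the buyer values $\{0,60,90\}$ or $\infty$, as in the proof of Theorem~\ref{thmImpossibility}. Each column of Table~\ref{tabMultiRoundsB} has exactly three $90$s, three $60$s and three $0$s, so the no-communication profit is \$40.

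\textbf{The 2-round construction.} Pair the goods as $\{1,2\},\{3,4\},\{5,6\}$. In round $r=2$ the seller truthfully announces which pair contains her good. Restricted to either good of the announced pair, the nine buyer types split into three groups of three: those valuing both goods at $90$, those valuing only the first at $60$, and those valuing only the second at $60$. For the pair $\{3,4\}$ these are types $\{1,6,9\}$, $\{2,4,7\}$ and $\{3,5,8\}$, and similarly for the other pairs. Hence each continuation game is a tripled copy of instance $A$.

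In round $r=1$ the buyer plays the instance-$A$ equilibrium: he reports whether he is in the ``second good only'' group. Take the pair $\{1,2\}$ for concreteness.
\begin{itemize}
\item After ``not second-only'', the seller of good 1 faces values $90$ and $60$ equally often, is indifferent between prices $60$ and $90$, and prices at $60$. The seller of good 2 faces values $90$ and $0$ and prices at $90$.
\item After ``second-only'', the seller of good 1 refuses to sell and the seller of good 2 prices at $60$.
\end{itemize}
The only binding buyer incentive constraint is for the $90$-type. Truthful reporting gives him $\tfrac12\cdot 30+\tfrac12\cdot 0=15$, and misreporting gives $\tfrac12\cdot 0+\tfrac12\cdot 30=15$, so he is indifferent and adheres. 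The profits are $60\cdot\tfrac23=40$ for the seller of good 1 and $90\cdot\tfrac13+60\cdot\tfrac13=50$ for the seller of good 2, which average to $45$.

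Two checks remain.
\begin{itemize}
\item \emph{The seller's round-2 incentive.} A seller who announces a pair not containing her good then faces a partition of buyer types designed for other goods. By the symmetry of the table, her good's values are spread across that partition so that each cell yields strictly less than her truthful payoff of $40$ or $50$. I would verify this for one representative deviation per pair and extend it by symmetry, using off-path beliefs that assign the deviator to the announced pair.
\item \emph{Ex ante symmetry of the round-2 announcement.} Types 1 and 2 of the seller get different payoffs ($40$ and $50$), so the seller's role within the pair must be fixed consistently. If a ``first-good'' seller would gain by mimicking, I would swap the roles or let the buyer's report single out the group favouring her, re-checking the numbers.
\end{itemize}

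\textbf{The 1-round bound.} A pure-strategy 1-round equilibrium is a partition $\mathcal P$ of the nine buyer types, the seller of good $k$ best-responding with her optimal price on each cell, and buyer incentive constraints across cells. I would compute the bound via the procedure of Section~\ref{subLP}: enumerate all partitions of $[9]$ (there are $21147$ of them, fewer up to the symmetries of the table), and for each cell and each good list \emph{all} optimal prices, since ties matter as the construction above shows. I would then search over tie-breaking selections for one satisfying buyer incentive compatibility, and take the maximum seller profit, expecting \$42.50.

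The intuition to record alongside the computation is as follows. Getting above \$40 for good $k$ requires separating its $0$-value types from its $60$-value types. A single partition cannot do this simultaneously for all six goods, because the $60/0$ patterns of goods in different pairs cross-cut one another. Splitting along one pair's pattern leaves the other four goods near \$40, and merging patterns reintroduces profitable mimicry by the $90$-types. This is exactly what the seller's round-2 message resolves.

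\textbf{Main obstacle.} I expect the 1-round upper bound to be the hard step, because of the tie-breaking multiplicity and the need to rule out every partition rather than a natural few. My first attempt would be exhaustive computation with tie enumeration. Only if a cleaner argument is wanted would I try an analytical proof that bounds each pair of goods' contribution given how a cell cuts that pair's three groups.
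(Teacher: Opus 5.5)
Your two-part plan matches the paper's. You enumerate buyer partitions for the 1-round bound, and for 2 rounds the seller names her pair and the buyer then plays the instance-$A$ split. The gap is your opening reduction: that each seller type is ``effectively a single-good seller'' posting a price in $\{0,60,90,\infty\}$. This is false in the paper's model, and the first claim depends on it.

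In this model the seller posts a menu of possibly fractional bundles of all six goods. The goods costing \$100 are not inert: buyer types value them differently, so they can be bundled in to screen. For example, seller type 3 facing buyer types $\{7,8,9\}$ can offer good 3 alone at \$90 and goods $\{3,6\}$ together at \$150. Type 9 takes the first, type 7 the second, and she earns $140/3\approx 46.67$ instead of the posted-price \$40.

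The \$42.50 figure arises only with such menus, as produced by the LP of Section~\ref{subLP}. Take the partition $\{\{4\},\text{rest}\}$. On the large cell, seller 1 can offer:
good 1 at \$90 (types 2, 3);
good 1 plus $\tfrac14$ each of goods 5 and 6 at \$105 (types 1, 5);
good 1 plus $\tfrac14$ each of goods 3 and 4 at \$105 (type 6).
This totals 345 over the eight types instead of 300. Sellers 3 and 5 use the symmetric menus and add \$60 from type 4, while sellers 2, 4, 6 earn 360 at price \$60. That gives $(3\cdot 405+3\cdot 360)/54=42.5$.

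Under your posted-price restriction with uniform priors, every equilibrium profit is $\tfrac1{54}$ times a sum of multiples of 30, i.e.\ a multiple of $5/9$. So your enumeration cannot return \$42.50; it solves a different game. It has to run the menu LP for each cell and seller type.

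Conversely, once menus are allowed, your 2-round seller check fails. Even under posted prices your claim of a strict loss is wrong. The first-good seller of each pair (types 1, 3, 5) earns exactly \$40 honestly and exactly \$40 after any deviation. Only the weak inequality holds, which suffices there.

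With menus the tie tips the wrong way. Seller 3 announcing $\{1,2\}$ gets at least \$40 from cell $\{1,\dots,6\}$ at price \$60, and $140/3$ from cell $\{7,8,9\}$ via the bundle above. That is at least $\tfrac23\cdot 40+\tfrac13\cdot\tfrac{140}{3}\approx 42.2>40$. So the deviation step also needs an argument over menus. The paper's own check (``the posterior is uniform on $\{0,60,90\}$, hence \$40'') takes the same single-good shortcut, so it cannot be cited to close this.

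As written, your argument is consistent only in the single-good reading, and there the 1-round value is not \$42.50.

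Two smaller slips:
after ``not second-only'' the seller of good 1 is not indifferent between \$60 and \$90, since they give 60 versus 45;
your ``ex ante symmetry'' worry is moot, since the round-2 message reveals only the pair and there is nothing within a pair to mimic.
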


\begin{proof}
	The first claim was computationally verified by enumerating all partitions of the 9 buyer types as described in Section~\ref{subLP}. The optimal partition places either type 4 or type 8 in a singleton part, with all other types together in another part.
	
	The 2-round cheap talk equilibrium sees the seller first tell the buyer whether he is looking to sell either goods 1 or 2, goods 3 or 4, or goods 5 or 6. Assuming the seller speaks honestly, the resulting subgame is equivalent to the 1-round cheap talk game over instance $A$, in which the seller obtains profit \$45. To see that the seller cannot profitably deviate, observe that each of the three subgames uses a different partition for the buyer's signal:
	\begin{itemize}
		\item If the seller is looking to sell goods 1 or 2, the buyer signals according to the partition $\{\{1, 2, 3\} \cup \{4, 5, 6\}, \{7, 8, 9\}\}$.
		\item If the seller is looking to sell goods 3 or 4, the buyer signals according to the partition $\{\{1, 6, 9\} \cup \{2, 4, 7\}, \{3, 5, 8\}\}$.
		\item If the seller is looking to sell goods 5 or 6, the buyer signals according to the partition $\{\{1, 5, 7\} \cup \{3, 4, 9\}, \{2, 6, 8\}\}$.
	\end{itemize}
	(We use unions to denote the merging of types 1 and 2 according to the signaling scheme from instance $A$.)
	One can check that, for each deviation, and every possible signal response, the seller learns nothing. For example, if the seller is type 3 or 4, but pretends to have type 1 or 2, and the buyer reveals his type to be in the set $\{7, 8, 9\}$. The posterior distribution in this case, as in all cases where the seller deviates, puts equal probability on values of \$0, \$60, and \$90---precisely the prior distribution in instance $A$. Thus, the seller will receive an expected profit of \$40 if she deviates, versus \$45 when she plays honestly.
\end{proof}

We remark on one final oddity observed in instance $B$: In the one-round cheap talk equilibrium, the buyer's surplus decreases, and does so by more than the seller's profit increases. Thus the availability of cheap talk actually increases deadweight loss, reducing the total surplus from $\$40 + \$10 = \$50$ to $\$42.50 + \$6.67 = \$49.17$.

\section{Discussion}\label{secDiscussion}

\subsection{Cheap Talk in Practice}\label{subPractice}

Our technical results have demonstrated that cheap talk can be rational in bilateral trade, even under the standard assumptions of fully rational individuals who are steeped in game theory. It can substantially enhance performance across a broad array of realistic situations. We now consider how behavioral regularities can further amplify these gains.

Cheap talk is used constantly in real-world interactions---including in single-message settings where, in principle, it should have no effect. Some of this communication is indeed inconsequential chatter. Yet even statements that game theorists would classify as pure ``babbling'' can influence outcomes once contextual cues and human decision tendencies are taken into account.

Many people, for example, are uncomfortable making claims that feel deceptive. If a buyer says, ``My spouse told me not to offer more than \$500 for this bureau,'' a seller may infer that the statement is likely genuine rather than part of a rehearsed bargaining script. If a substantial fraction of buyers find deception discomforting, the seller has reason to take such remarks seriously. The same applies to a roadside antique store that posts a sign reading ``Prices are not negotiable.'' Most shoppers will treat it as binding and refrain from bargaining, even though an occasional buyer may succeed in negotiating an exception.

Buyers can also transform cheap talk into something closer to ``costly talk'' by taking small steps that enhance credibility. Many of us have exited a car dealership saying some version of: ``If you won’t move closer to my number, I’ll go to your competitor down the road.'' A salesperson will assume a bluff, since bluffing is a common feature when buying a car. But if the buyer produces a business card from the competing dealer or, better, a purchase proposal not yet filled out, the threat becomes much more believable. Adding modest but visible steps---bits of ``effort'' or evidence---elevates the cost of the statement and thus boosts its weight.

Experienced negotiators learn to mimic these cues. Sellers, for instance, commonly promise to ``beat any competitive price,'' a pledge that buyers rarely redeem, but most tend to interpret (or misinterpret) it as a signal that the quoted price is already attractive.

Cultural practices can further endow cheap talk with structure and cost. In many societies, extended pre-negotiation rituals---such as shared meals, gift exchanges, or formal courtesies---impose modest but real costs on entering a negotiation and establish personal connections that raise the psychological cost of misrepresentation, thereby enriching the informational value of verbal communication.

Earlier sections showed that cheap talk becomes directly valuable as soon as we venture a step beyond a single-play setting. Parties who interact repeatedly---such as suppliers and long-term clients---understand the discipline that repetition imposes. A supplier who too often cites ``rising costs'' to justify price increases risks losing steady customers who know from experience that such claims are exaggerated.

Cheap talk also becomes informative when negotiations involve multiple goods or attributes. A speaker may admit a high valuation in one dimension to credibly assert a low valuation in another. A job candidate might say: ``I care most about title and responsibility; I am willing to leave salary where it is, even though I have higher offers elsewhere. If you can make me an associate rather than an assistant vice president, I will accept at your current salary.'' By conceding on one attribute, the candidate gains credibility when claiming high value on another. Importantly, this logic does not depend on behavioral considerations: conceding on one dimension to gain credibility on another would also yield value in a formal model with fully rational players. The behavioral setting simply makes the mechanism more natural and more likely to arise spontaneously.

In short, while cheap talk yields no value in the textbook one-good, one-message, one-round model, real-world actors rarely confine themselves to that model. They introduce small costs, rituals, corroborating signals, or multi-attribute trade-offs that make cheap talk informative. Yet even sophisticated analysts rarely pause to ask why it works; it works often enough that they take the mechanism for granted. In richer settings---exchanges of messages, negotiations over multiple dimensions, or repeated interactions---claims about value and cost become mutually reinforcing and thus credible, even when no statement is directly verifiable.

Often, even when a negotiation consists of a single message followed by a straightforward financial transaction, other dimensions lurk in the background. The biblical account in Genesis 23, describing Abraham's purchase of the Cave of Machpelah and the surrounding field from Ephron the Hittite as a burial site, provides a canonical illustration. (The account is embraced by Islam, Christianity and Judaism.) The land would immediately serve as a tomb for wife Sarah, and later for himself and his descendants.

The negotiation takes place publicly, before Ephron's family, his community, and especially its elders. Following the customary formalities, Ephron addresses Abraham: ``My lord, hearken unto me: the land is worth four hundred shekels of silver; what is that betwixt me and thee? bury therefore thy dead.'' (Genesis 23:15, King James Version.) Abraham offers no counterproposal. He simply accepts the stated price and pays the four hundred shekels of silver.

This brief exchange has attracted extensive commentary from rabbis, philosophers, and modern analysts, including game theorists. There is broad agreement on one central point: Ephron's statement was cheap talk in the formal sense: costless, unverifiable, and nonbinding. Yet the public setting gave it force. Framed as a gesture of courtesy, the remark anchored the price at a high level. Most commentators conclude that Abraham substantially overpaid, but that that overpayment was deliberate. He secured publicly witnessed, unambiguous title, together with the moral legitimacy that came with it. In this way, Ephron's cheap talk facilitated an exchange that yielded significant value to both parties.

This example shows that cheap talk in real-world settings can complement institutions that enhance its impact. Its persistent use in practice is not a curiosity. Rather it reveals a practical tool producing better deals.

\subsection{Future directions}\label{subFutureDirections}

Though this study resolves a number of technical questions, it admittedly opens many more. Little is understood about what structural properties of a given bilateral trade setting render cheap talk useless or more or less valuable. Do fundamental limits exist?

For example, one concrete question is the factor by which cheap talk may improve the welfare of each player. In the example from Section~\ref{subExampleBuyerWelfareImprovement}, cheap talk allows the buyer to quadruple his surplus. This generalizes easily to arbitrary large constant factors, even infinity (i.e., it is possible that the buyer's surplus jumps from zero to nonzero). However, for the seller, we do not even know whether cheap talk can improve the expected profit by a factor of, say, 2.

Beyond enumerating partitions of the buyer's type space, we know of no systematic way to compute optimal cheap talk games for either player. Searching over the space of mixed strategy equilibria in games of imperfect information is notoriously difficult. Our setting does not appear to be any easier. Even the restriction to pure strategy equilibria, the complexity of finding an optimal cheap talk game given discrete, finitely-supported priors remains an open problem. Presumably it could even be PSPACE-Hard. Our Theorem~\ref{thmMultiRounds} indicates multiple rounds of interaction may be required.

A technical extension, not considered in this paper, is risk aversion. Risk aversion becomes relevant only when allocations can be randomized (otherwise the values of goods can be thought of simply as expected utilities and nothing changes). Could there be selling mechanisms, even for a single good, that employ randomized allocations to induce risk-averse buyers to reveal information in a way that facilitates profits?

A more open-ended question is, what is the ``right'' model of mechanism design when the designer holds private information? We have argued for a specific way of capturing the designer's limited commitment power. The designer may commit to a menu of prices to apply at the end of the game.  Before that the only actions players can take are cheap talk messages. While we believe that this format is simple and realistic, we could also allow for other dynamic elements, such as voluntary payments from one party to the other mid-way through the protocol, while receiving subsets of goods or even nothing in return. In any case, we hope to have made the case that the existing canon of ideas from mechanism design hardly settles this question.

Our work calls for a broader consideration of when and where cheap talk may prove useful. It is clearly useless in a zero-sum game. Bilateral trade shares many characteristics of a zero-sum game, in that the seller always wants higher prices and the buyer always wants lower prices. Yet, as we have shown, cheap talk can yield value in surprising ways because there is always an element of common interest when efficiencies loom. One area that is ripe for further exploration is in allocation mechanisms, where agents strategically compete for scarce resources. Even the slightest heterogeneity in preferences could open a window to potential gains if information is effectively shared via cheap talk. Equivalent reasoning could hold that cheap talk generates value in a wide variety of other adversarial games, such as negotiating territorial disputes or troubled partnerships.

The cheap talk concept bristles with implications for how negotiations are and might be conducted. In Section~\ref{subPractice} we made several observations in this realm.  We looked in particular at the intertwining of the technical findings presented here and behavioral proclivities. Empirical studies of the actual use of cheap talk would be welcome. They could come both from data sets, e.g., on dance of negotiation in corporate mergers, and from studies using human subjects in laboratories. 

\subsection{Final Thoughts}

Cheap talk presents a puzzle. It is ubiquitous in economic and social interaction, yet it rarely conveys decisive information, never achieves full efficiency, and only sometimes yields value. This pattern is neither paradoxical nor na\"ive. Through experience and trial and error. rather than equilibrium calculation, individuals retain practices—cheap talk among them—that sometimes reduce missed opportunities to make beneficial exchanges, promote partial alignment of interests, and by doing so reduce deadweight loss. The qualifiers matter. Cheap talk works only sometimes; it can foster alignment without fully securing it; it can mitigate inefficiency without eliminating it. It does not direct play to the optimal path, but it can nudge interaction toward a more favorable one. And as we have also seen, it can also be exploited by one party to the direct detriment of the other.

In this sense, cheap talk belongs to a broader class of low-cost social mechanisms that persist precisely because they can improve outcomes where more formal or verifiable instruments are infeasible, prohibitively costly, or too rigid to deploy. Rational agents should therefore not view cheap talk as a behavioral crutch or a failed substitute for commitment. Rather, it is best understood as a strategic device—one capable of generating value along the path of play even while falling short of full efficiency. Our analysis identifies both the channels through which such value arises and the limits that bound its reach.

\section*{Acknowledgments}

We are deeply grateful to Elliot Lipnowski for his excellent feedback and pointers on an earlier draft of this paper. We also thank Andrzej Skrzypacz and Dirk Bergemann for very helpful discussions. Zeckhauser thanks the Mossavar-Rahmani Center for Business and Government for research support.

\bibliographystyle{plain}
\bibliography{bibliography}

\end{document}